\documentclass[aps,pra,twocolumn,superscriptaddress]{revtex4-2}
\usepackage{amsfonts,bbm,graphicx,times}
\usepackage[pdftex]{color}
\usepackage[utf8]{inputenc}
\usepackage{graphicx}
\usepackage[colorlinks, linkcolor=blue, citecolor=blue, urlcolor=blue, breaklinks=true]{hyperref}
\usepackage{braket}
\usepackage{soul}
\usepackage{autobreak}
\usepackage{algpseudocode}
\usepackage{algorithm}
\usepackage{microtype}

\usepackage{amsthm}
\usepackage{csquotes}

\usepackage{amsthm}\newcommand{\be} {\begin{equation}}
\usepackage{bbold}
\usepackage{dsfont}
\newcommand{\ee} {\end{equation}}

\newcommand{\Tr}{{\rm Tr}}

\newtheorem{theorem}{Theorem}

\usepackage{color}
\usepackage{mathrsfs}
\usepackage{amssymb}
\newcommand{\bla}{bla\\bla\\bla\\bla\\bla}

\DeclareUnicodeCharacter{2283}{\supset}
\usepackage{ragged2e}

\begin{document}
\title{Optimal measurement-based quantum thermal machines in a finite-size system}

\author{Chinonso Onah}
\affiliation{Department of Physics, RWTH Aachen University, Germany}
\affiliation{Volkswagen AG, Berliner Ring 2, Wolfsburg 38440, Germany}
\author{Obinna Uzoh}
\affiliation{Department of Physics, Simon Fraser University, 8888 University Drive, Burnaby, B.C. Canada}
\author{Obinna Abah}
\affiliation{School of Mathematics, Statistics and Physics, Newcastle University, Newcastle upon Tyne NE1 7RU, United Kingdom}

\begin{abstract}
We present a measurement-based quantum thermal machine that extracts work from the back-action of generalized quantum measurements whose working medium is a coupled two-level quantum system. Specifically, we derive universal optimization criteria for a three-stroke measurement-based engine cycle with coupled two-level system of Ising-like interaction as a working medium. Furthermore, we present two numerical algorithms to optimize the engine work extraction and enhance its performance. Our numerical results demonstrate:
(i) efficiency peaks in the projective-measurement limit;
(ii) symmetry breaking (detuning or weak coupling) enlarges the  exploitable energy gap; and
(iii) performance remains robust ($>50\%$ of optimum) under $\sim\!10^\circ$ feedback-pulse errors. The framework is platform-agnostic and directly implementable with current superconducting, trapped-ion, or NMR technologies, providing a concrete route to scalable, measurement-powered quantum thermal machines.
\end{abstract}

\maketitle

\section{Introduction}
Thermal machines have been at the heart of thermodynamics since the early days of the field \cite{Callen1985,Deffner2019book}. They are often used to provide mechanical energy from thermal energy. At microscopic scales, energy exchange processes are profoundly influenced by quantum coherence, correlations, and measurement backaction, which requires a reexamination of fundamental thermodynamic concepts such as work, heat, and machine performance \cite{Myers}. Recent advances in nanofabrication techniques have led to the realization of coherent quantum thermal devices \cite{Koski2015PRL,Martinez2016,Rossnagel2016,Josefsson2018,Lindenfels2019PRL}. Moreover, understanding the thermodynamics of quantum systems is an important topic at the intersection of quantum information, nonequilibrium physics, and emerging quantum technologies \cite{RevModPhys.93.035008}. 

Another quantum feature gaining prominence in nonequilibrium thermodynamics is quantum measurement, which, beyond its traditional function as a means of information extraction, can act as an active dynamical process that injects energy, induces dissipation, and modifies system-environment interactions. Recent studies have shown that quantum measurements can play a constructive role in the operation of quantum thermal machines \cite{Yi2017,Elouard2017,Myers}. This has led to growing interest in measurement-based quantum thermal machines, where measurements -- information and feedback -- are exploited as thermodynamic resources to enable refrigeration, work extraction, or heat pumping \cite{Elouard2017,Yi2017,Perna2024}. These kinds of thermal devices have been studied with different working substances, such as coupled-qubits~\cite{Bhandari2022}, two spins \cite{Das2018}, and two spin-1/2 interacting via the anisotropic XXZ Heisenberg interaction \cite{Anka2021PRE,Behzadi2025}.   

The interplay between quantum measurement and Maxwell's demon arrow of time has been examined through the framework of the quantum measurement-based engine \cite{yanik22}. The coefficient of performance for the modified refrigerator through generalized measurement has been shown to increase linearly with the measurement strength parameter, beyond the classical cooling of the familiar Otto cycle refrigerator \cite{Behzadi2024}.
The realization has been explored in the proof-of-concept experiment using a nuclear magnetic resonance setup \cite{Lisboa2022PRA}, and later became a basis for investigating the interplay between coherent control and an indefinite causal order \cite{Dieguez2023}.

Moreover, it has been demonstrated that suppressing quantum coherences may enhance the performance of measurement-based engines \cite{Lin2021}. It is shown that there is the possibility of circumventing the degradation effects induced by coherence, which enhance the work extraction and efficiency of the engine, by having sufficient control over the measurement basis angles.  That is, an appropriate choice of the measurement basis can effectively act as a form of \emph{quantum lubrication} that suppresses the detrimental coherences generated during the engine cycle operation. More generally, controlling both measurement bases and applied unitary operations can significantly enhance the performance of measurement-driven thermal machines \cite{Wang2024PLA}.
Beyond coherence control, additional quantum resources have also been identified as mechanisms for performance enhancement. For instance, considering the performance of measurement engines with different initial states, it is demonstrated that nonideal measurement engines operating with thermally correlated configurations outperform their entanglement-based counterparts \cite{Abhisek}. In a recent study, correlations between spin pairs have been demonstrated to enable modulation of efficiency through the choice of measurement basis, highlighting the important role of quantum correlations in optimizing quantum thermal machines \cite{Rathnakaran2025}. Furthermore, the performance of measurement-driven engines depends explicitly on the strength parameters of the measurement process, which are determined by the coupling constants between the working substance and the measurement apparatus \cite{Behzadi2025}. 

Motivated by these developments, in this work we investigate the optimization criteria for a measurement powered engine cycle whose working medium is a finite-size quantum system coupled by Ising-like interaction.  The engine cycle comprises three-strokes of measurement, feedback, and thermalization for a system initially prepared in the thermal state. By constructing the feedback Hamiltonian, we determine the criteria for optimal feedback angles for maximum work extraction and efficiency. We present two numerical algorithms, hybrid local-global search and simple grid search, to obtain the optimal feedback angles. We further examine the impact of on-site symmetry on the performance of the measurement-based engine.

The rest of the paper is organized as follows: In Sec. \ref{sec:protocol} we briefly present the operation of the finite size coupled two-level quantum systems cycle using weak measurements. We further present the work extraction and performance of the cycle when it functions as an engine. In Sec. \ref{Sec-feedback-hamiltonian}  we construct the feedback Hamiltonian. The optimal feedback angles corresponding with the extraction of maximum work are derived in Sec. \ref{sec:problem}. In Sec. \ref{sec:examples} we present the exact solvable examples of single qubit and two-body coupled quantum systems.  We present the numerical methods in Sec. \ref{sec:algorithms_angles}, with conclusions in Sec. \ref{sec:conclusion}.

\begin{figure}[!t]
\begin{center}
\includegraphics[width=0.98\columnwidth]{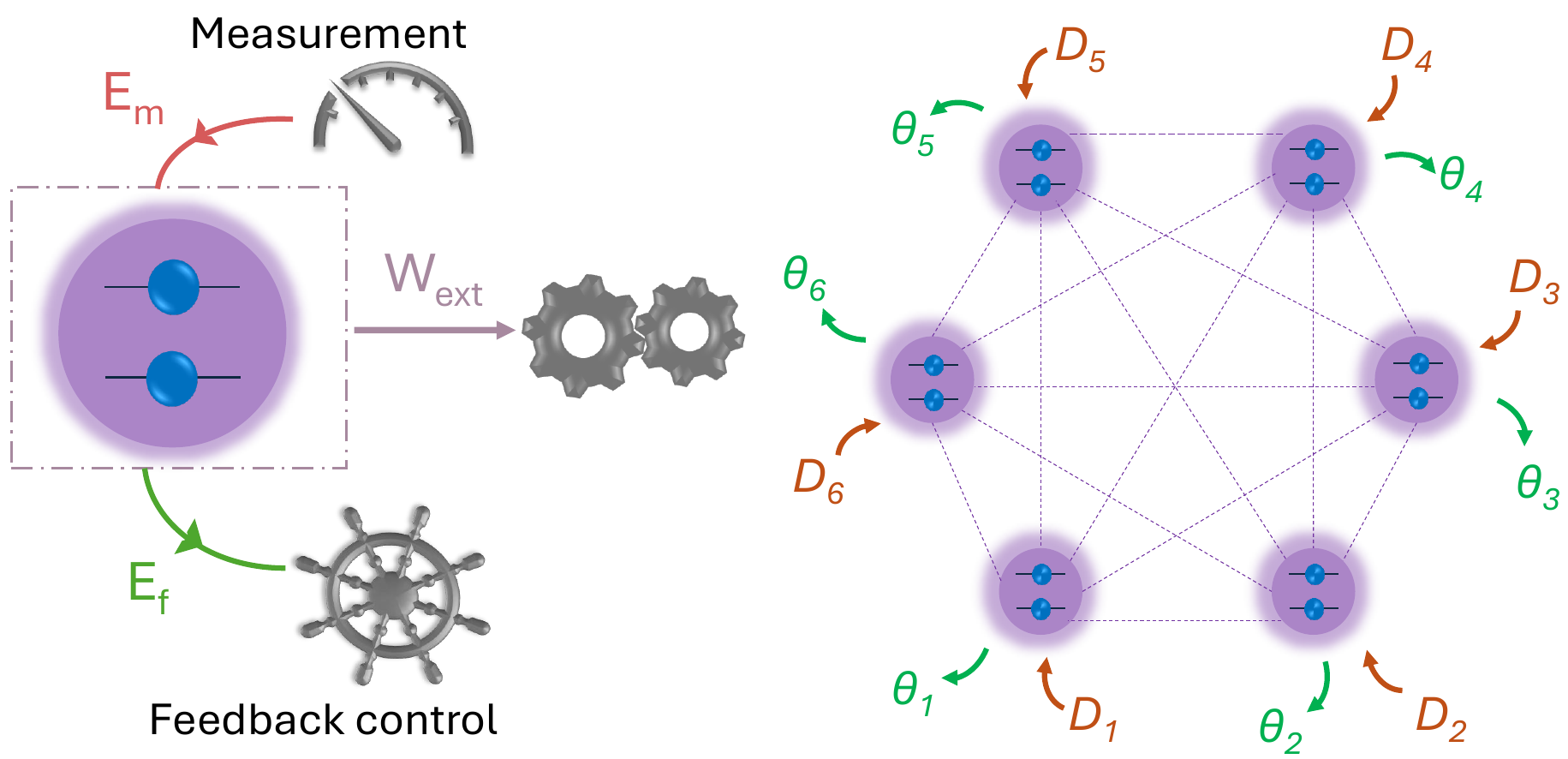}
\caption{\textit{Left panel:} Schematic illustration of feedback driven measurement-based quantum machine with a two-level system.  \textit{Right panel:} Thermal device working substance composing of a six coupled two-level system. Each two-level system (skyblue circle) is initially aligned along the $z$-axis and the gray dashed lines represent couplings between the two-level systems. Red arrows indicate weak $\sigma_x$ measurements by detectors $D_i$, and green arcs labeled $\theta_i$ represent local feedback (rotations around $y$-axis) applied based on measurement outcomes.}
\label{fig1}
\end{center}
\end{figure}



\section{Measurement-based quantum engine }\label{sec:protocol}
We consider a quantum thermal machine based on quantum measurement whose working substance is a generic finite number of coupled two-level quantum systems, see Fig. \ref{fig1}. 
The system is described by Hamiltonian 
\begin{align}
H_S\!=\!\frac{1}{2}\,\mathbf{I}_{\otimes N}
\,+\,
\sum_{j=1}^{N}\frac{\epsilon_j}{2}\,\sigma_z^{j}
+ \sum_{j<k}^{N}\Delta^{jk}_{zz}\,\sigma_z^{j}\,\sigma_z^{k},
\label{eq:HQ_n_qubits_shifted}
\end{align}
where the indices $j$ identify single two-level quantum systems that make up the quantum thermal machine. The stages of the cycle can be described as follows.

\textit{Initialization} -- First, the system is initially prepared at thermal equilibrium with a thermal reservoir at temperature $T$. The initial state reads $\rho_{th}\!=\!\exp{-\beta H_S}/Z$, where $Z\!=\!\mbox{tr}(e^{-\beta H_S})$ is the partition function and the corresponding average energy is given by $E_i\!=\!\mbox{tr}(\rho_{th} H_S)$. 

\textit{Measurement} -- In the second stroke, the system is isolated from the thermal reservoir and then interacts with the probe or measurement device $D$ that performs a general measurement. The measurement is described by a positive operator-valued measure (POVM) with the corresponding Kraus operators {$M_k$}. The resulting post-measurement state conditioned on outcome \(k\) for the system reads
\begin{equation}
\rho_{M_k} = \frac{M_k \rho_{th} M_k^\dagger}{p_k},
\label{eq:post}
\end{equation}
where $p_k\!=\!\mbox{Tr} [M_k \rho_{th} M_k^\dagger ]$.  For instance, when discrete weak \(\sigma_x\) measurements are performed on the quantum system $H_S$ using the probe(s) \(D_i\) (for \(i = 1, 2, \ldots\)), they are characterized by the Kraus operators \(\{M_{i+}, M_{i-}\}\), given by
\begin{align}
    M_{i\pm} &= \tfrac12 \bigl(\sqrt{\kappa_i} + \sqrt{1-\kappa_i}\bigr) \,\mathbf{I}_{2\otimes N}\nonumber\\
    &\pm\,\tfrac12\bigl(\sqrt{\kappa_i} - \sqrt{1-\kappa_i}\bigr)
    \bigl(\mathbf{I}_{2\otimes(i-1)}\otimes\sigma_x\otimes \mathbf{I}_{2\otimes(N-i)}\bigr).
\end{align}
The measurement operators satisfy the completeness relation, 
\begin{equation}
M_{i+}^{\dagger}M_{i+}+M_{i-}^{\dagger}M_{i-}=\mathbf{I}_{2\otimes N},\qquad i=1,\dots,N.
\end{equation}
where $\mathbf{I}_{2\otimes N}$ is the identity matrix of $N$ coupled two-level system. In the analyses to follow, all quantities are evaluated branch by branch at fixed \(\kappa\). For this reason, we suppress the explicit \(\kappa\)-dependence and branch label in the notation and use \(\rho_M\) to denote the post measurement state. Thus, the quantum system energy after measurement is $E_m\!=\!\mathrm{Tr}(H_S \rho_M).$ 

\textit{Feedback control} -- The third process, which follows after measurement, is a local application of a unitary feedback operation $U_j(\theta_j)$, which is parameterized by the angle $\theta_j$. At the end of the process, the post-feedback state is denoted by the density matrix as
\begin{equation}
\rho_{F}
\;=\;
U(\{\theta_j\})
\,\rho_{M}\,
U^\dagger(\{\theta_j\}),
\end{equation}
with the corresponding energy of the system after feedback given by $E_F = \mbox{Tr}(\rho_F H_S)$. 

\textit{Thermalization} -- Finally, put the system in contact with the heat reservoir to allow heat  exchange until it returns to its initial thermal state $\rho_{th}$.
To maintain cyclic operation, the engine requires resetting/erasing the final state $\rho_F$ back to the initial thermal equilibrium state $\rho_{th}$. The minimal thermodynamic cost associated with this erasure process is bounded by Landauer’s principle \cite{RevModPhys.93.035008}. Specifically, the erasure work $W_{\mathrm{er}}$ is
\begin{equation}
W_{\mathrm{er}} = k_B T\, D(\rho_F \| \rho_{th}),
\label{eq:work_erasure}
\end{equation}
where $D(\rho\|\sigma)\!=\! \mathrm{Tr}\left[\rho(\ln\rho - \ln\sigma)\right]$ denotes the quantum relative entropy.

\textit{Performance} -- Let us now consider the work exchanges and efficiency for this measurement-based engine. The performance of the thermal machine is embodied in the total output work extraction per a given cycle defined by
\begin{equation}
W_{ext} = E_m - E_F= \mathrm{Tr}\bigl[\,H_S\bigl(\rho_{M} - \rho_{F}\bigr)\bigr].
\end{equation}
Given these definitions, the efficiency $\eta$ of the quantum measurement-based engine is defined as the ratio between the net useful work output (after accounting for erasure costs) and the energy provided at the measurement stage $E_m$:
\begin{equation}
\eta = \frac{W_{\mathrm{ext}} - W_{\mathrm{er}}}{E_m}.
\label{eq:efficiency_general}
\end{equation}
Therefore, the efficiency takes the explicit form:
\begin{equation}
\eta = \frac{\mathrm{Tr}\left[H_S(\rho_M - \rho_F)\right] - k_B T D(\rho_F\|\rho_{th})}{\mathrm{Tr}(H_S \rho_M)}.
\label{eq:efficiency_explicit}
\end{equation}




\section{Constructing the feedback Hamiltonian}\label{Sec-feedback-hamiltonian}
To simplify the analysis of the optimal feedback problem, let  $ H_{S,N}$ represent $N$ fully connected quantum measurement engines for which we are to determine the optimal feedback angles for maximum work extraction after a measurement process. Given that the post-measurement state is known, we observe that the optimal feedback angles can be determined by applying arbitrary unitary transformation on the system Hamiltonian to simulate the inverse of the feedback operations and solving the resulting optimization problem to determine the angles.  To see this, one exploits the cyclicity of trace to express the energy after feedback in terms of the post measurement state, the system Hamiltonian and arbitrary angles as follows  
\begin{align}
    E_F=\mbox{tr}(\rho_F H_S) \equiv \mbox{tr}(\rho_M H_F(\{\theta_j\})),
\end{align}
where the feedback Hamiltonian $H_F$ is given by
\begin{equation}\label{eq:HF}
H_F(\{\theta_j\}) = \bigotimes_{j=1}^{N} U_j^\dagger(\theta_j)
\left( \sum_{j=1}^{N} H_j + \sum_{j<k} H_{jk} \right) \bigotimes_{j=1}^{N} U_j(\theta_j).
\end{equation}
Here, each term in the sums represents a single-body Hamiltonian $H_j$ or a two-body interaction Hamiltonian $H_{jk}$, resulting from the corresponding feedback unitaries \(U_j(\theta_j)\). The feedback unitary can be local or global, and the operators satisfy the constraint $U_j^\dagger(\theta_j) U_j(\theta_j) = \mathbf{I}, \quad \forall \; j=1,2,\dots, N.$


Considering the rotation around the $y$ axis, 
the local feedback unitary rotation \(U_j(\theta_j)\) generated by $\sigma_y$ reads
\begin{align}
U_j^\dagger(\theta_j) \sigma_x^{j} U_j(\theta_j) &= \sigma_x^{j} \cos\theta_j + \sigma_z^{j} \sin\theta_j,  \, \mathrm{and} \nonumber\\
U_j^\dagger(\theta_j) \sigma_z^{j}U_j(\theta_j) &= -\sigma_x^{j} \sin\theta_j + \sigma_z^{j} \cos\theta_j, \label{eq:U_transformations2}
\end{align}
where $\sigma_i$ ($i=x,y,z$) are the Pauli matrices.
On the other hand, the global (non-local) feedback operator is given by 
\begin{equation}
U_G(\theta) = \exp\left(-i \theta\, \sigma_y^{1} \sigma_y^{2}\right) = \cos \theta\, \mathbf{I} - i \sin \theta\, \sigma_y^{1} \sigma_y^{2}.
\end{equation}
We remark that the local feedback protocol outperforms the global feedback protocol; see the Appendix \ref{global_appendix}. 


\section{Optimization problem: Optimal feedback}\label{sec:problem}
Our goal is to determine the set of angles \(\{\theta_j\}\) that minimizes the energy functional \(E_F\). Formally, the optimization problem is stated as:
\begin{equation}
\min_{\{\theta_j\}} E_F(\{\theta_j\}) =\mbox{tr} \Bigl[ \rho_M H_F(U(\theta_j))\Bigr],
\end{equation}
subject to $U_j^\dagger(\theta_j) U_j(\theta_j) = I,\, \forall j = 1,2,\dots,N.$

The solution to this problem is given by the set of optimal feedback angles \(\{\theta_j^*\}\)  corresponding to the stationary points of the energy surface, which  satisfy
\begin{equation}
\frac{\partial E_F}{\partial \theta_j} = 0,\quad \forall \; j=1,2,\dots,N\,.
\label{eq:partial_theta_j_general}
\end{equation}
By inspecting the collection of second derivatives that forms the Hessian matrix \(\mathcal{H}\) with elements
\begin{equation}
\mathcal{H}_{jk} = \frac{\partial^2 E_F}{\partial \theta_j\, \partial \theta_k},
\end{equation}
we can determine if the stationary point is a minimum, maximum, or saddle point. Thus, it is often straightforward to evaluate the corresponding energy values of the stationary angles to determine the optimal feedback angles \(\{\theta_j^\ast\}\).

We now address the central result of this paper, that is, to obtain the optimal feedback angles that result in the extraction of maximum work from a measurement-based quantum thermal machine of a finite-size system. Assuming that the working medium of the quantum machine is a $N$-coupled two-level system, the central result is presented in the theorem and followed by some examples.

\begin{theorem}[Stationarity Conditions for the Optimal Feedback Angles]
\label{thm:optimal_feedback_induction}

Consider a quantum measurement-based machine of finite size consisting of $N$ coupled two-level systems described by the Ising-like Hamiltonian of the form \cite{Padurariu2010,Pita-vidal2024}:
\begin{align}
H_S\!=\!\frac{1}{2}\,\mathbf{I}_{\otimes N}
\,+\,
\sum_{j=1}^{N}\frac{\epsilon_j}{2}\,\sigma_z^{j}
+ \sum_{j<k}^{N}\Delta^{jk}_{zz}\,\sigma_z^{j}\,\sigma_z^{k},
\end{align}
where $\mathbf{I}_{\otimes N}$ is the identity matrix of the $N$-space, $\epsilon_j$ is the energy of qubit $j$, $\sigma_z^{j}$ is the Pauli-$Z$ operator on qubit $j$, and $\Delta_{zz}$ denotes the coupling between qubits $j$ and $k$.

After a projective or POVM-based measurement on the quantum system, the post-measurement state is $\rho_{M}$.  Then, apply local feedback unitaries,
\begin{align}
U(\{\theta_j\}) \;=\; \bigotimes_{j=1}^{N} U_j(\theta_j),
\end{align}
with
$U_j(\theta_j)\!=\!\exp\!\Bigl(-i\,\theta_j\,\sigma_y^{j}/2\Bigr)$.
The post-feedback state is
\begin{align}
\rho_{F}
\;=\;
U(\{\theta_j\})
\,\rho_{M}\,
U^\dagger(\{\theta_j\}).
\end{align}
The optimal local feedback angles $\{\theta_j^{\ast}\}_{j=1}^N$ satisfy a set of self-consistent tangent equations of the form
\begin{align}
\tan \theta_j
\;=\;
\frac{-B_j(\theta_1,\ldots,\theta_{j-1},\theta_{j+1},\ldots,\theta_{N})}%
     {A_j(\theta_1,\ldots,\theta_{j-1},\theta_{j+1},\ldots,\theta_{N})},
\label{eq:theta_j_general_form_shifted}
\end{align}

\begin{equation}
\theta_j^\ast
=
\operatorname{atan2}\!\bigl(-B_j,A_j\bigr)
\qquad (\mathrm{mod}\,\pi).
\label{eq:atan2_general}
\end{equation}

for $1 \,\le j \,\le N$, with:
\begin{equation}
A_j(\{\theta_{k\neq j}\})
\;=\;
\frac{\epsilon_j}{2}\,\Bigl\langle \widetilde{\sigma_{z}^{j}}\Bigr\rangle
\;+\;\sum_{k\neq j}\Delta_{zz}^{jk}\;\Bigl\langle\widetilde{\sigma_{z}^{j}\sigma_{z}^{k}}\Bigr\rangle,
\end{equation}
\begin{equation}
B_j(\{\theta_{k\neq j}\})
\;=\;
\frac{\epsilon_j}{2}\,\Bigl\langle \widetilde{\sigma_{x}^{j}}\Bigr\rangle
\;+\;\sum_{k\neq j}\Delta_{zz}^{jk}\;\Bigl\langle\widetilde{\sigma_{x}^{j}\sigma_{z}^{k}}\Bigr\rangle.
\end{equation}

The “ \(\widetilde{\phantom{\sigma}}\)” indicates partial rotations on the other qubits (see Appendix~\ref{thm:optimal_feedbak}, for proof). $\Box$
\end{theorem}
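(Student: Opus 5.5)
The plan is to work entirely in the Heisenberg-like picture of Sec.~\ref{Sec-feedback-hamiltonian}, writing $E_F=\mathrm{tr}\bigl(\rho_M H_F(\{\theta_j\})\bigr)$ with $H_F$ as in Eq.~(\ref{eq:HF}), and to isolate the dependence of $E_F$ on a single angle $\theta_j$. Since $U=\bigotimes_k U_k(\theta_k)$ and $H_S$ contains only single-site $\sigma_z^j$ terms and two-site $\sigma_z^j\sigma_z^k$ terms, conjugation factorizes site by site. The identity term contributes the constant $\tfrac12$. Every term of $H_S$ that does not contain site $j$ is independent of $\theta_j$. For the remaining terms, I will apply Eq.~(\ref{eq:U_transformations2}) at site $j$ only, and leave the rotations on the other sites $k\neq j$ unevaluated. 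These are the ``partial rotations'' indicated by the tilde.

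Concretely, I will define, for any operator $O$ on site $j$,
\[
\bigl\langle \widetilde{O^{j}\sigma_z^{k}}\bigr\rangle
:=\mathrm{tr}\Bigl[\rho_M\, O^{j}\otimes U_k^\dagger(\theta_k)\sigma_z^kU_k(\theta_k)\Bigr],
\]
with $\langle\widetilde{O^j}\rangle=\mathrm{tr}(\rho_M O^j)$ for single-site terms. For the half-angle unitary $U_j=e^{-i\theta_j\sigma_y^j/2}$, conjugation rotates $\sigma_z^j$ into a combination of $\cos\theta_j\,\sigma_z^j$ and $\sin\theta_j\,\sigma_x^j$. Using the sign convention of Eq.~(\ref{eq:U_transformations2}), this gives
\[
E_F=C_j+\cos\theta_j\,A_j-\sin\theta_j\,B_j ,
\]
where $C_j$ collects all $\theta_j$-independent terms. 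Here $A_j$ and $B_j$ are exactly the stated expressions: the $\sigma_z^j$ and $\sigma_x^j$ coefficients, each weighted by $\epsilon_j/2$ from the single-site term and by $\Delta_{zz}^{jk}$ from each coupling. These coefficients depend only on $\{\theta_{k\neq j}\}$. The main bookkeeping step, and the likely obstacle, is verifying that the sign convention and the half-angle versus full-angle parametrization match Eq.~(\ref{eq:U_transformations2}). This matters because a sign flip changes $-B_j$ to $+B_j$. I will fix it by computing the single-qubit case explicitly.

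Differentiating gives the stationarity condition (\ref{eq:partial_theta_j_general}):
\[
\partial_{\theta_j}E_F=-A_j\sin\theta_j-B_j\cos\theta_j=0 ,
\]
so $\tan\theta_j=-B_j/A_j$. This holds for each $j$ simultaneously, and since $A_j,B_j$ depend on the other angles, the system is self-consistent.

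To select the minimum rather than the maximum among the two solutions modulo $\pi$, I will write
\[
A_j\cos\theta_j-B_j\sin\theta_j=R_j\cos(\theta_j+\phi_j),
\qquad R_j=\sqrt{A_j^2+B_j^2}.
\]
The extrema then occur at $\theta_j=-\phi_j$ (value $+R_j$) and at $\theta_j=-\phi_j+\pi$ (value $-R_j$). These branches are encoded by $\operatorname{atan2}(-B_j,A_j)$ modulo $\pi$, with the minimizing representative identified through the second derivative $\partial^2_{\theta_j}E_F=-(A_j\cos\theta_j-B_j\sin\theta_j)$, i.e.\ the diagonal Hessian entry. This yields Eq.~(\ref{eq:atan2_general}).

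Finally, I will note an equivalent inductive view. Adding qubit $N+1$ introduces only new terms $\Delta_{zz}^{j,N+1}\sigma_z^j\sigma_z^{N+1}$, which add new summands to $A_j$ and $B_j$ without altering the structure. This confirms that the form holds for all $N$.
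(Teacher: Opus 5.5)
Your proposal is correct and follows essentially the same route as the paper's proof in Appendix~\ref{thm:optimal_feedbak}: write $E_F=\mathrm{tr}(\rho_M H_F)$ and use the local conjugation rules to get $E_F=C_j+A_j\cos\theta_j-B_j\sin\theta_j$, with $A_j,B_j$ depending only on $\{\theta_{k\neq j}\}$; then differentiate to obtain $A_j\sin\theta_j+B_j\cos\theta_j=0$. The paper does $N=1,2$ explicitly and then gives this same general-$j$ computation as its nominal inductive step, and your amplitude--phase analysis adds a point the paper leaves implicit: $\operatorname{atan2}(-B_j,A_j)$ itself is the coordinate-wise maximizer, and the minimizing branch is that value shifted by $\pi$, which is why the (mod $\pi$) in the statement is essential.
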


The theorem determines the stationary-angle equations that generate the candidate feedback angles. The optimal feedback is then obtained by evaluating the feedback energy over these stationary solutions and selecting the minimizing branch through Algorithm~\ref{alg:extended_optimal_feedback} or Algorithm~\ref{alg:gridsearch_feedback}.

\begin{figure*}[!ht]
\includegraphics[width=1\textwidth]{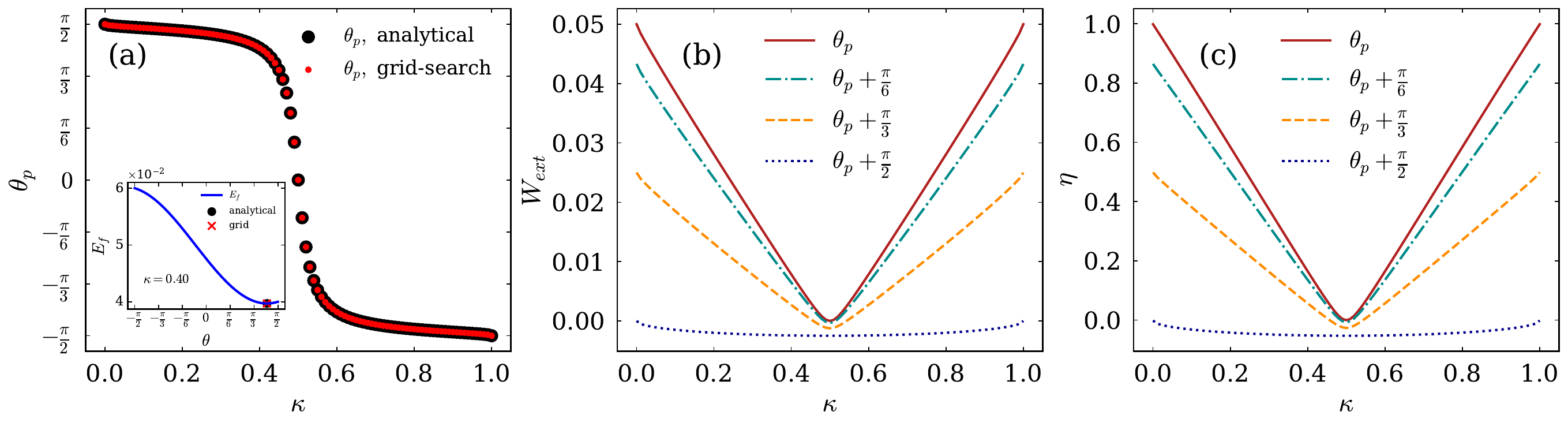}
\caption{Left to right: Optimal feedback angle, maximum work extraction, and efficiency as a function of measurement strength parameter for a single qubit measurement-based engine. The inset is the energy functional as a function of optimal angle with the red-cross corresponding to the optimal angle deduced from the grid search algorithm.}
\label{fig2}
\end{figure*}
\section{Examples}\label{sec:examples}
Here, we present explicit derivations of the optimal angle(s) for two cases, that is, a single qubit, $N=1$ and two coupled qubits  measurement-based engines, $N=2$.

\subsection{Single qubit measurement-based engine}
\label{subsec:single_measurement}

Consider the measurement-based quantum engine whose working medium is described by the Hamiltonian \footnote{An elementary algebraic treatment of the single engine case based on transformation properties of the Pauli matrices is included in the appendix~\ref{sec:single_engine_analysis}.};
\begin{align}
H_S =  \frac{1}{2} \mathbf{I_2} + \frac{\epsilon_1}{2} \sigma_z^{1}.
\end{align}
The feedback protocol is of the form $U\!=\!\exp(-i\,\theta_1 \sigma_y^{1}/2),$, so
\begin{equation}
H_F = \frac{1}{2}\,  \mathbf{I_2} + \frac{\epsilon}{2} (\sigma_z \cos \theta - \sigma_x \sin \theta) .
\end{equation}
Defining the expectation values, $\langle \sigma_z \rangle\!=\!a$ and
$\langle \sigma_x \rangle\!=\!b$, the energy of the quantum system after feedback operation is
\begin{equation}
E_F =  \frac{1}{2} + \frac{\epsilon}{2} \, a \cos \theta - \frac{\epsilon}{2} \, b \sin \theta.
\label{eq:EF}
\end{equation}

The optimal feedback angle $\theta^\ast$ associated with the maximum extraction of the work output corresponds to the minimum feedback energy $\min_{\theta} (E_F)$. Minimizing $\frac{dE_F}{d\theta}\!=\!0$, the  optimal feedback angle \( \theta^* \) reads \cite{Kurt2003}
\begin{equation}
\theta^\ast = \arctan\left(-b/a\right).\footnote{\noindent
For the general theorem we adopt the branch-sensitive form
\begin{equation}
\theta_j^\ast
=
\operatorname{atan2}(-B_j,A_j)
\qquad (\mathrm{mod}\,\pi),
\end{equation}
whereas in the explicit one- and two-qubit examples we keep the simpler relations
\(
\tan\theta^\ast=-\frac{b}{a},
\qquad
\tan\theta_i^\ast=-\frac{B_i}{A_i},
\)
which are obtained directly from the elementary stationary equations. The physically relevant solution is then identified by selecting the branch that minimizes \(E_F\).}
\label{eq:theta_optimal}
\end{equation}
Figure \ref{fig2} shows the feedback angle, work extraction and efficiency as a function of measurement strength. Figure \ref{fig2}(a) shows the correspondence between the analytical and grid search solution of the feedback angles. The inset is the feedback energy as a function of the optimal feedback angle when the measurement strength is $\kappa=0.40$. The total extraction of work per cycle and the efficiency for different phase shifts in the optimal feedback angle are presented in Figs. \ref{fig2}(b) and (c), respectively. We observe that increasing the angle phase shift decreases the work extraction and the engine efficiency.


\subsection{ Two-body measurement-based engine}
Considering the case of measurement-based quantum engine using a two-coupled quantum system as its working substance. Specifically, the system Hamiltonian is in the form,
\begin{equation}
H_S =  \frac{1}{2}\mathbf{I}_{2\otimes 2} + \frac{\epsilon_1}{2} \sigma_z^{1} + \frac{\epsilon_2}{2} \sigma_z^{2}  + \Delta_{z}^{12}\, \sigma_z^{1} \sigma_z^{2}, 
\label{eq:2QME}
\end{equation}
where $\sigma_z^j$ represents the Pauli operator acting on qubit $j\!=\!1,2$; the coefficients $\epsilon_j$ are the energy of qubit $j$ and $\Delta_z^{12}\!=\!\Delta_z$ represents the coupling energy between qubits $1$ and $2$. Recently, a single superconducting spin  qubit \cite{Yoshihara2021} as well as supercurrent-mediated coupling between two distant spin qubits have been experimentally explored \cite{Pita-Vidal2023} and with all-to-all connected spin qubits in Ref. \cite{Pita-vidal2025}.
Assuming $\epsilon_1\!=\!\epsilon_2\!=\!0.5$ in Eq.~\eqref{eq:2QME}, the energy eigenvalues read,
\begin{equation}
\begin{aligned}
E_{00}&=1+\Delta_z,\\
E_{01}&=E_{10}=0.5-\Delta_z,\\
E_{11}&=\Delta_z.
\end{aligned}
\label{eq:eigenvals}
\end{equation}
There exist pairwise degeneracies that occur at specific values of $\Delta_z$; namely $\Delta_z\!=\!+0.25$ where $E_{01}\!=\!E_{11}$, and $\Delta_z\!=\!-0.25$ where $E_{01}\!=\!E_{00}$. Let the difference between the first excited and ground energies be defined as the gap $\Delta E$, which can be  distinguished into three regions as follows;
\begin{equation}
\Delta E(\Delta_z)=
\begin{cases}
1,               & \Delta_z\le -0.25,\\
0.5-2\Delta_z,   & -0.25\le \Delta_z\le +0.25,\\
2\Delta_z-0.5,   & \Delta_z\ge +0.25.
\end{cases}
\label{eq:gap_two_spin}
\end{equation}

\begin{figure}[!]
\includegraphics[width=0.94\columnwidth]{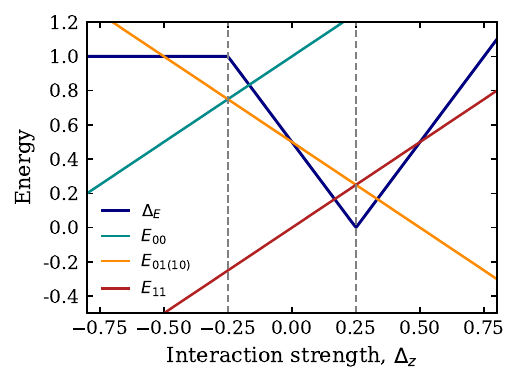}
 \caption{Energy spectrum and gap of the two-coupled quantum spin system with $\epsilon_1\!=\!\epsilon_2\!=\!0.5$ as a function of the interaction strength $\Delta_z$.
 Energy gap $\Delta E$ as a function of $\Delta_z$. For $\Delta_z\le-0.25$ (ferromagnetic branch) the gap saturates at $\Delta E=1$; inside the window $-0.25\le\Delta_z\le0.25$ it varies
linearly; for $\Delta_z\ge0.25$ (antiferromagnetic branch) it grows unbounded as $\Delta E\simeq 2\Delta_z$.}
\label{fig3}
\end{figure}
Figure~\ref{fig3} shows the energy eigenvalues and the gap \(\Delta E\) as a
function of \(\Delta_z\). The gap is continuous at \(\Delta_z=\pm 0.25\), with
a flat branch only for \(\Delta_z\le -0.25\). In contrast, for
\(\Delta_z\ge 0.25\) the gap grows linearly with slope \(2\). Thus, for
\(0<\Delta_z<0.25\) the gap still decreases with increasing \(\Delta_z\), and
only beyond the crossing at \(\Delta_z=0.25\) does the antiferromagnetic branch
become the branch with increasing gap. These kinks reflect the level crossings
of the longitudinal Ising spectrum in the absence of transverse terms \cite{Sachdev2011, Majer2005, PetterssonFors2024}. In this symmetric case, the degeneracy in the Hamiltonian results in stationary points that correspond to the maximum energy difference, $E_m-E_F(\theta)$.

In the limit of zero coupling, the optimal feedback operation of the quantum measurement-based machine will correspond to the discussion in Ref. \cite{Bhandari2022}. 
Based on Theorem~\ref{thm:optimal_feedback_induction}, the optimal feedback angles satisfy;
\begin{equation}
\theta_i = \arctan \left( \frac{-B_i}{A_i} \right), \quad i=1,2;\footnote{\noindent
For the general theorem we adopt the branch-sensitive form and the physically relevant solution is then identified by selecting the branch that minimizes \(E_F\).} 
\label{eq:optimal_angles}
\end{equation}
where 
\begin{eqnarray}
A_1 &=& \frac{\epsilon_1}{2}\,a_1 + \Delta_z^{12}\,\left(c_{zz}\,\cos\theta_2 - c_{zx}\,\sin\theta_2\right),\\
B_1 &=& \frac{\epsilon_1}{2}\,b_1 +
\Delta_z^{12}\,\left[c_{xz}\,\cos\theta_2 - c_{xx}\,\sin\theta_2\right],\\
A_2 &=& \frac{\epsilon_2}{2}\,a_2 +
\Delta_z^{12}\,\left[c_{zz}\,\cos\theta_1 - c_{xz}\,\sin\theta_1\right],\\
B_2&=& \frac{\epsilon_2}{2}\,b_2 + \Delta_z^{12}\,\left[c_{zx}\,\cos\theta_1 - c_{xx}\,\sin\theta_1\right],
\end{eqnarray}
with $\langle\sigma_z^{i}\rangle\!=\!a_i$,  $\langle\sigma_z^{1}\sigma_z^{2}\rangle\!=\!c_{zz}$, $\langle\sigma_z^{1}\sigma_x^{2}\rangle\!=\!c_{zx}$, $\langle\sigma_x^{i}\rangle\!=\!b_i$, $\langle \sigma_x^{1}\sigma_z^{2}\rangle\!=\!c_{xz}$, and $\langle\sigma_x^{1}\sigma_x^{2}\rangle\!=\!c_{xx}.$

Following the algorithm protocol to obtain the optimal feedback angle in Section \ref{sec:algorithms_angles}, we present the feedback energy as a function of angles based on two different algorithms in Fig. \ref{fig:energy_surface}. 
Then, using the resulting optimal energy after feedback, we can evaluate the work output and efficiency of two-qubit quantum measurement-based engines.
\begin{figure}[!h]
\includegraphics[width=0.94\columnwidth]{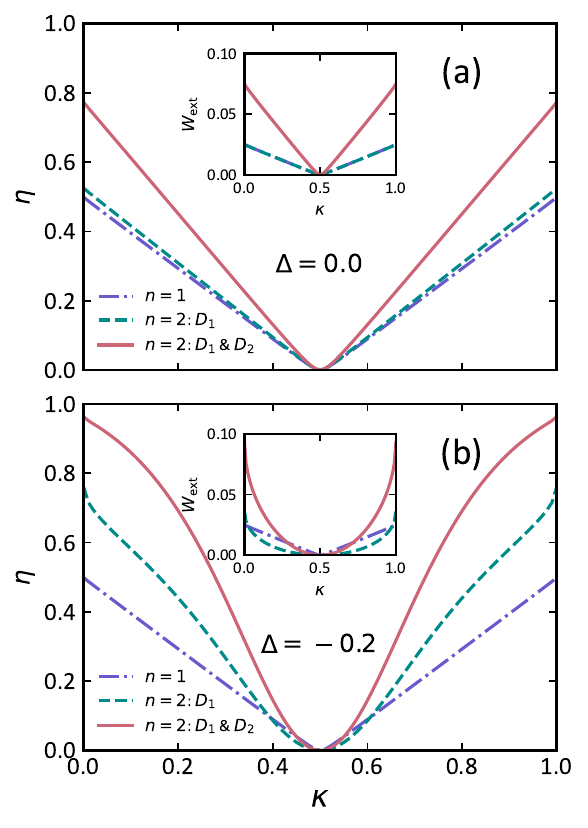}
 \caption{Efficiency as a function of measurement strength for two qubits measurement-based engine at optimal feedback angles for three engine configuration, (a) $\Delta\!=\!0.0$ and (b) $\Delta\!=\!-0.2$. A single-qubit engine ($n=1$) is represented by purple dotted-dashed curve, a two-qubit engine with a single detector ($n=2:D_1$)  is teal dashed curve, and a two-qubit engine with double detectors ($n=2:D_1$ \text{and} $D_2$) is denoted with pink solid curve. \emph{Insets}: Output work as function of measurement strength. Parameters are $\epsilon_1\!=\!-0.05$ and $\epsilon_2\!=\!-0.10$.}
\label{figure4}
\end{figure}
Figure~\ref{figure4} shows the efficiency $\eta$ and the extracted work output $W_{\text{ext}}$ (shown in the insets) of the quantum measurement-based engines as a function of the measurement strength ($\kappa$). The analysis compares three distinct engine configurations; a single-qubit engine ($n=1$), a two-qubit engine with a single detector ($n=2:D_1$) and a two-qubit engine with two detectors ($n=2:D_1$ \text{and} $D_2$).

Figure~\ref{figure4}~(a) illustrates the dynamics of the engine when there is no coupling between the qubit systems, $\Delta\!=\!0.0$. All configurations exhibit strict symmetry around an intermediate measurement strength of $\kappa = 0.5$, at which point both efficiency and work output vanish completely. This indicates that no useful work can be extracted in this exact measurement regime. The performance of the measurement-based engine peaks at the weak ($\kappa \to 0$) and strong ($\kappa \to 1$) measurement limits. 
The two-qubit engine utilizing two detectors demonstrates a clear thermodynamic advantage, reaching a peak efficiency of approximately 0.8, which significantly outperforms the other configurations. Furthermore, the inset reveals a degeneracy in the work output: at $\Delta = 0.0$, the $n=1$ and $n=2: D_1$ configurations overlap perfectly, yielding identical extracted work output despite their different theoretical setups. Figure~\ref{figure4}~(b) depicts the systems response for non-zero qubit coupling, $\Delta\!=\!-0.2$. Although the fundamental symmetric structure and the zero-performance point at $\kappa\!=\!0.5$ are preserved, the shift in $\Delta$ remarkably improves the maximum capabilities of the engine. In particular, the two-qubit engine with two detectors ($n\!=\!2:D_1$ and $D_2$) achieves  optimal efficiency $\eta \approx 0.95$ at measurement extremes. In addition, non-zero $\Delta$ breaks the degeneracy observed in Figure~\ref{figure4}~(a). The one-site energy was set to $\epsilon_1=-0.05$ and $\epsilon_2=-0.10$, to achieve the improved engine performance of utilizing two measurement detectors over one measurement detector.

Moreover, breaking the on-site symmetry, $\epsilon_1\neq\epsilon_2$, clearly increases the energy gap in which the feedback operation can be exploited to enhance the extractable work of the measurement-based engine . That is, the feedback energy associated with the symmetry-breaking scenario, $E_F^\ast(\xi)$, is less than the symmetric case, where $\xi\!=\!\varepsilon_2-\varepsilon_1$. Therefore, extractable work and efficiency are enhanced by non-trivial symmetry breaking, i.e., $W_{\max}(\xi\!\neq\!0)>W_{\max}(\xi\!=\!0)$ and $\eta_{\max}(\xi\!\neq\! 0)\!>\!\eta_{\max}(\xi\!=\!0)$.


\begin{figure}[!th]
    \centering
    \includegraphics[width=0.95\columnwidth]{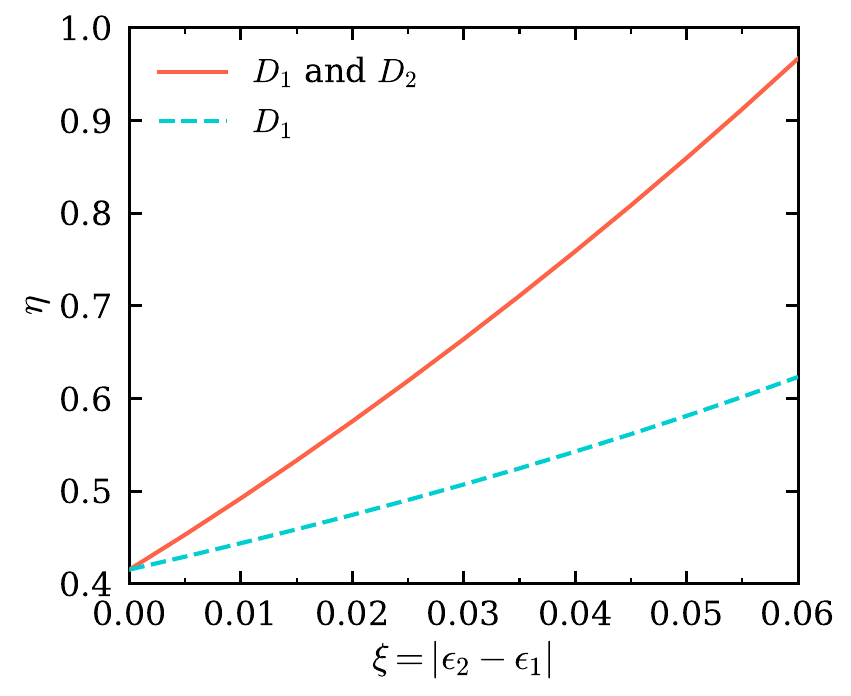}
    \caption{Efficiency $\eta$ as a function of  detuning $\xi$  
    for the two coupled qubits measurement-based engine at fixed qubit coupling $\Delta\!=\!-0.20$ and measurement strength $\kappa\!=\!0.10$.   The red curve shows the case of one detector  while the green curve correspond to the case of using two detectors ($D_1$ and $D_2$).}
    \label{fig:sym_eff}
\end{figure}

Figure~\ref{fig:sym_eff} presents numerical results for the efficiency of the measurement-based engine as a function of the symmetry breaking parameter associated with the on-site energy of the working substance. We observe that engine efficiency increases with increasing difference in the energy gap of the composite coupled finite quantum system.
Similar monotonic behavior has been observed in measurement-powered engines using superconducting qubits \cite{Bresque2021}, aligning with the broader principle that breaking symmetry increases the available non-equilibrium free energy \cite{Adlam2022,Myers2020PRE}. This reflects findings from two-qubit engines \cite{Bresque2021} and entropic demons \cite{Elouard2017}, reinforcing the idea that symmetry acts as a thermodynamic resource; its breaking enables feedback mechanisms to extract additional free energy.




\section{Algorithms for the angles optimisation}
\label{sec:algorithms_angles}

In this section, we present numerical procedures for estimating feedback angles
that minimize the post-feedback energy \(E_F(\boldsymbol\theta)\). The resulting
optimization landscape is generally nonconvex. An exhaustive grid search on an
\(M^N\) lattice scales as \(\mathcal{O}(M^N)\), which rapidly becomes
impractical as \(N\) grows. For this reason, we use two complementary
procedures: a self-consistent fixed-point refinement initialized from coarse
grid seeds, and a pure grid-sweep baseline for small systems.
However, for small- to moderate-sized systems, high-quality and often certified optimal angles are obtained using a two-tier optimization strategy that combines local gradient-based descent with a global grid-based exploration.  

\paragraph{Hybrid local–global search
(Alg.~\ref{alg:extended_optimal_feedback}).}
The routine starts from a coarse Cartesian grid
$\theta_j\in[-\pi,\pi]$ with spacing
$\Delta\theta\simeq10^{-1}\text{–}10^{-2}$.
Each grid point is polished by one step of the
self-consistent update
$\operatorname{atan2}\!\bigl(-B_j,A_j\bigr)
\qquad (\mathrm{mod}\,\pi)$
Eq.~\eqref{eq:theta_j_general_form_shifted}. The left–arrow “$\leftarrow$’’ in the algorithms is \emph{computer-science} notation,
  meaning “overwrite the current value of $\theta_j$ with the right-hand
  expression’’ and  $\vec\theta_{\!\backslash j}$ denotes the vector of all angles
  \emph{except} the $j$-th one, i.e.\
  $\vec\theta_{\!\backslash j}=(\theta_1,\ldots,\theta_{j-1},
  \theta_{j+1},\ldots,\theta_N)$.
 This update is a self-consistent fixed-point / block-coordinate refinement based
on the stationary-angle equations in
Eq.~\eqref{eq:atan2_general}. In our numerical experiments it converges
rapidly for the single- and two-qubit examples when initialized from a coarse
grid. We then cluster nearby stationary points (using an \(\ell_\infty\)-metric
with tolerance \(10^{-3}\)) and rank them by the resulting feedback energy.
For the two-qubit case, the best cluster coincides with the minimum found by the
brute-force grid sweep to numerical precision.

\begin{figure}[!h]
\includegraphics[width=1\columnwidth]{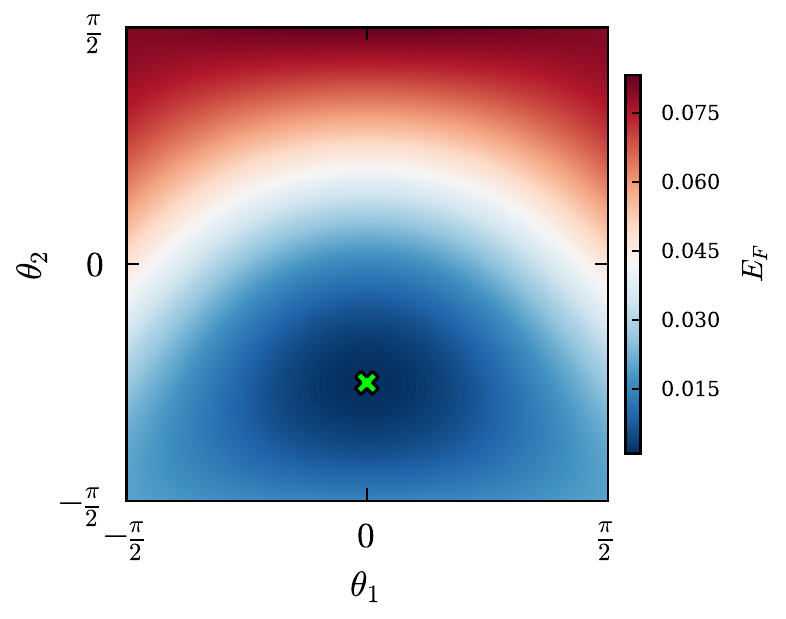}
 \caption{Feedback energy landscape, $E_F(\theta_1, \theta_2)$ for two coupled-qubit measurement-based engine. 
The \emph{green cross} marks the solution found by a local, iterative algorithm that updates \(\theta_1\) and \(\theta_2\) based on self-consistent arctan formulas, which coincides with the global minimum identified from a brute-force grid search over \((\theta_1, \theta_2)\).
The systems parameters are $\epsilon_1\!=\!0.05$, $\epsilon_2\!=\!0.10$, $\Delta\!=\!-0.2$, and $\kappa\!=\!0.2$. For this case of $\kappa$, the work extracted is~$2.07\times10^{-2}$ with an efficiency of~$0.69$.}
\label{fig:energy_surface}
\end{figure}


\paragraph{Pure grid-sweep baseline
(Alg.~\ref{alg:gridsearch_feedback}).}
For small systems we also use a uniform grid sweep over
\((\theta_1,\dots,\theta_N)\), evaluate \(E_F\) (or equivalently the gradient
norm \(\|\nabla E_F\|\)) on that grid, and refine candidate stationary points by
a local quadratic fit. Although this brute-force procedure scales as
\(\mathcal{O}(M^N)\) and is therefore limited to very small \(N\), it provides a
useful numerical baseline for the single- and two-qubit examples studied here.



In fully connected arrays of spins, the number of local minima increases exponentially, -- a behavior characteristic of the spin-glass system \cite{Aarts1989} -- rendering the traditional grid-based search intractable. Nevertheless, a carefully designed combination of local descent, global enumeration, and symmetry reduction enables near-optimal feedback angles for all instances studied, while maintaining scalability to the medium-sized systems relevant to current quantum thermodynamic platforms.

\begin{algorithm}[H]
\caption{Hybrid seed-and-refine search for stationary feedback angles}
\label{alg:extended_optimal_feedback}
\begin{algorithmic}[1]
\Require System parameters; correlators entering \(A_j,B_j\); coarse grid
\(\theta_j\in[\theta_{\min},\theta_{\max}]\); tolerance \(\varepsilon\); maximum
iterations \(K_{\max}\).
\State \(\mathcal{S}\gets \varnothing\)

\For{each coarse-grid seed \(\boldsymbol\theta^{(0)}\)}
    \State \(\widetilde{\boldsymbol\theta}\gets \boldsymbol\theta^{(0)}\)
    \For{\(k=1\) to \(K_{\max}\)}
        \For{\(j=1\) to \(N\)}
            \State
            \(
            \widetilde{\theta}_j
            \gets
            \operatorname{atan2}\!\bigl(
            -B_j(\widetilde{\boldsymbol\theta}_{\backslash j}),
            A_j(\widetilde{\boldsymbol\theta}_{\backslash j})
            \bigr)
            \)
        \EndFor
        \If{\(\|\widetilde{\boldsymbol\theta}^{(k)}-\widetilde{\boldsymbol\theta}^{(k-1)}\|_\infty<\varepsilon\)}
            \State \textbf{break}
        \EndIf
    \EndFor
    \State \(\mathcal{S}\gets \mathcal{S}\cup
    \{(\widetilde{\boldsymbol\theta},E_F(\widetilde{\boldsymbol\theta}))\}\)
\EndFor

\State Cluster nearby angle vectors in \(\mathcal{S}\)
\State Return the clustered stationary points and the one with minimal \(E_F\)
\end{algorithmic}
\end{algorithm}

\begin{algorithm}[H]
\caption{Simple Grid Search}
\label{alg:gridsearch_feedback}
\begin{algorithmic}[1]
\Require System parameters \(\epsilon_1\), \(\epsilon_2\), \(\Delta\); post-measurement state \(\rho_M\); angle range \(\theta_{\mathrm{min}}, \theta_{\mathrm{max}}\); grid size \(N\); gradient tolerance \(\epsilon_{\mathrm{tol}}\).
\Statex

\State \(\Delta\theta \gets (\theta_{\mathrm{max}}-\theta_{\mathrm{min}})/(N-1)\)
\State \(\mathcal{C} \gets \varnothing\) \Comment{Initialize candidate list for stationary points}

\For{$i=0$ \textbf{to} $N-1$}
    \For{$j=0$ \textbf{to} $N-1$}
        \State \(\theta_1 \gets \theta_{\mathrm{min}} + i\,\Delta\theta\)
        \State \(\theta_2 \gets \theta_{\mathrm{min}} + j\,\Delta\theta\)

        \Comment{Compute the partial derivatives of the feedback energy}
        \State \(G_{\theta_1} \gets \dfrac{\partial E_F}{\partial \theta_1}(\theta_1,\theta_2)\)
        \State \(G_{\theta_2} \gets \dfrac{\partial E_F}{\partial \theta_2}(\theta_1,\theta_2)\)
        \State \(\|\nabla E_F\| \gets \sqrt{G_{\theta_1}^{2}+G_{\theta_2}^{2}}\)

        \Comment{Mark as candidate if gradient norm is below threshold}
        \If{$\|\nabla E_F\| < \epsilon_{\mathrm{tol}}$}
            \State \(\mathcal{C} \gets \mathcal{C} \cup \{(\theta_1,\theta_2)\}\)
        \EndIf
    \EndFor
\EndFor

\Comment{Group nearby candidates into clusters of stationary points}
\State \(\mathcal{S} \gets \mathrm{Cluster}(\mathcal{C})\)

\Comment{Select a representative from each cluster}
\State \(\mathcal{P} \gets \varnothing\)
\For{$C \in \mathcal{S}$}
    \State \((\theta_1^\ast,\theta_2^\ast) \gets \mathrm{Representative}(C)\)
    \State \(\mathcal{P} \gets \mathcal{P} \cup \{(\theta_1^\ast,\theta_2^\ast)\}\)
\EndFor

\Comment{Evaluate feedback energy at each stationary-point representative}
\State \(\mathcal{E} \gets \varnothing\)
\For{$(\theta_1^\ast,\theta_2^\ast) \in \mathcal{P}$}
    \State \(E^\ast \gets E_F(\theta_1^\ast,\theta_2^\ast)\)
    \State \(\mathcal{E} \gets \mathcal{E} \cup \{(\theta_1^\ast,\theta_2^\ast,E^\ast)\}\)
\EndFor

\Comment{Identify the global minimum from the stationary-point set}
\State \((\theta_1^{\mathrm{opt}},\theta_2^{\mathrm{opt}},E^{\mathrm{opt}})\gets
\operatorname*{arg\,min}_{(\theta_1^\ast,\theta_2^\ast,E^\ast)\in\mathcal{E}} E^\ast\)

\State \Return \(\mathcal{E}\) and \((\theta_1^{\mathrm{opt}},\theta_2^{\mathrm{opt}},E^{\mathrm{opt}})\)
\end{algorithmic}
\end{algorithm}


\section{Conclusions}\label{sec:conclusion}

We have developed a framework for optimal measurement-based feedback in finite quantum measurement engines composed of coupled two-level systems. While a single qubit quantum measurement engine can be returned to its original Bloch orientation to maximize work extraction, we show that coupling fundamentally alters the optimization objective. For coupled finite-size systems, the optimal feedback strategy is more naturally defined by maximizing the net energy transfer rather than restoring the local state orientation.
To determine the optimal feedback parameters, we derived explicit conditions for the feedback angles and implemented complementary numerical approaches.
An iterative scheme was used to solve the coupled feedback equations, while a grid-search method ensured that all stationary solutions were identified. Our results show that the interplay between the coupling strength and the measurement parameters generates a nontrivial energy landscape for the feedback angles, with multiple local extrema of the feedback energy.


The proposed framework is directly relevant to experimental platforms where measurement and feedback are accessible, including superconducting qubits in circuit-QED architectures~\cite{Cottet2017}, trapped-ion systems, and quantum-dots arrays with tunable couplings. Moreover, the approach is naturally extendable to continuous-variable platforms, such as cavity optomechanical modes, where measurement-induced backaction and feedback control can be engineered with high precision. Finally, our findings provide a systematic route for optimizing feedback protocols in measurement-driven quantum thermal machines and open new possibilities for exploiting correlations and measurement backaction to enhance thermodynamic performance in both discrete and continuous quantum systems.

\appendix
\section*{Appendix}

\section{Optimal feedback of a measurement-based engine}
\label{sec:single_engine_analysis}
Here, we present the pedagogical analysis of a quantum measurement-based engine whose working substance is a two-level system. Starting with the initial density matrix of the system $\rho_i\!=\!\tfrac{1}{2}\bigl(\mathbf{I} + \vec\sigma\cdot\hat r_i\bigr)$ and taking the post-measurement density matrix $\rho_M=\tfrac{1}{2}\bigl(\mathbf{I} + \vec\sigma\cdot\hat r_m\bigr)$, where $\vec\sigma$ is the Pauli matrix and $r_j (j=i,m)$ is the direction vector. 
Then, after implementing an arbitrary feedback protocol with a rotation by angle~$\theta$ about the unit axis~$\hat n$, the resulting density matrix is as follows,
\begin{equation}
  \rho_f 
  = e^{-i\frac{\theta}{2}\,\hat n\cdot\vec\sigma}\,
    \rho_M\,
    e^{\,i\frac{\theta}{2}\,\hat n\cdot\vec\sigma}.
  \label{eq:feedback}
\end{equation}
Using the relation $(\vec r\cdot\vec\sigma)\,(\hat n\cdot\vec\sigma) = \vec r\cdot\hat n + i\,\vec\sigma\cdot(\vec r\times\hat n)$ and the rotation identity, Eq.~(\ref{eq:feedback}) becomes
\begin{equation}
  \rho_f
  = \tfrac{1}{2}\,\bigl(\mathbf{I} + \vec{r}_f \cdot \vec{\sigma}\bigr)
  \label{eq:rho_f_final}
\end{equation}
where
\begin{align}
 \vec r_f\cdot\vec\sigma
  &= e^{-i\frac{\theta}{2}\,\hat n\cdot\vec\sigma}\,
    (\hat r_m\cdot\vec\sigma)\,
    e^{\,i\frac{\theta}{2}\,\hat n\cdot\vec\sigma}\nonumber\\
  &=
  \Bigl[
    \hat n(\hat n\cdot\hat r_m)
    + \cos\theta\bigl(\hat r_m-\hat n(\hat n\cdot\hat r_m)\bigr) \nonumber\\
   & + \sin\theta\,(\hat n\times\hat r_m)
  \Bigr]\!\cdot\!\vec{\sigma}.
  \label{eq:rf_def}
\end{align}



Taking the co-ordinates of the Bloch sphere after measurement $\hat r_m=x_+\hat x+z_+\hat z$, and assuming that the feedback rotation is around the $y$ axis (i.e. $\hat n=\hat y$) with a global phase $\phi=0$,  the general form of the rotated Bloch vector is
\begin{align}
  \vec r_f
  &= \cos\theta\,\vec r_m
    + \sin\theta\,(\hat y\times\vec r_m)\nonumber\\
  &= (x_m\cos\theta+z_m\sin\theta)\,\hat x
    +(z_m\cos\theta-x_m\sin\theta)\,\hat z\,.
  \label{eq:bloch_length}
\end{align}
The resulting feedback density matrix is
\begin{equation}
\rho_f
  = \tfrac12\bigl(\mathbf{I} + \sigma_x\,(x_+\cos\theta+z_+\sin\theta)
  +\sigma_z\,(z_+\cos\theta - x_+\sin\theta)\bigr).
\end{equation}


The optimal feedback condition requires that the protocol outcome $\rho_f$ aligns with the $z$ axis up to magnitude $|z_f|$, i.e. (see, \cite{Lloyd2000,Kurt2003,yanik22})
\begin{equation}
  \rho_f
  = \tfrac12\bigl(I - |z_f|\,\sigma_z\bigr)
  = \tfrac12\bigl(I + \vec r_f\cdot\vec\sigma\bigr),
  \label{eq:optimal_condition}
\end{equation}
\begin{equation}
  -|z_f|\,\sigma_z
  = (x_+\cos\theta+z_+\sin\theta) \sigma_x\,
  + (z_+\cos\theta - x_+\sin\theta) \sigma_z.
  \label{eq:yz_reduction}
\end{equation}
From the $\sigma_x$ coefficient   we get $x_+\cos\theta+z_+\sin\theta=0$, then 
\begin{equation} \tan\theta = -\frac{x_+}{z_+}, 
\end{equation} 
$x_+$ and $z_+$ correspond, respectively, to $b$ and $a$ in Eq. (\ref{eq:theta_optimal}).

\section{Local transformation of single-body Pauli matrices}\label{sec:local_paulis}
Under the unitary rotation \(U_j(\theta_j)\) generated by \(\sigma_y\), the Pauli matrices transform as, using
\begin{equation}
U_j(\theta_j)=\cos\!\left(\frac{\theta_j}{2}\right)I
-i\sin\!\left(\frac{\theta_j}{2}\right)\sigma_y^j,
\end{equation}
\begin{equation}
U_j^\dagger(\theta_j)=\cos\!\left(\frac{\theta_j}{2}\right)I
+i\sin\!\left(\frac{\theta_j}{2}\right)\sigma_y^j,
\end{equation}
together with the Pauli commutation relations, one obtains
\begin{equation}
\begin{aligned}
U_j^\dagger(\theta_j)\,\sigma_x^{j}\,U_j(\theta_j)
&=\sigma_x^{j}\cos\theta_j+\sigma_z^{j}\sin\theta_j,\\
U_j^\dagger(\theta_j)\,\sigma_z^{j}\,U_j(\theta_j)
&=-\,\sigma_x^{j}\sin\theta_j+\sigma_z^{j}\cos\theta_j.
\end{aligned}
\end{equation}
This gives the identities used in the construction of the feedback Hamiltonian.

\subsection{Local transformation of two-body Pauli matrices}
Since all feedback unitaries are local, the following transformations are calculated for the two-body terms:


(a). \textbf{Transforming} \(\sigma_z^{1} \otimes \sigma_z^{2}\):
\begin{equation}
\begin{aligned}
U_2(\theta_2)^\dagger U_1(\theta_1)^\dagger \left( \sigma_z^{1} \otimes \sigma_z^{2} \right) U_1(\theta_1) U_2(\theta_2)
&= \\
\sigma_x^{1} \sigma_x^{2} \sin\theta_1 \sin\theta_2 - \sigma_x^{1} \sigma_z^{2} \sin\theta_1 \cos\theta_2 \\
 - \sigma_z^{1} \sigma_x^{2} \cos\theta_1 \sin\theta_2  + \sigma_z^{1} \sigma_z^{2} \cos\theta_1 \cos\theta_2.
\end{aligned}
\end{equation}

(b). \textbf{Transforming} \(\sigma_z^{1} \otimes \sigma_x^{2}\) 
\begin{equation}
\begin{aligned}
U_2^\dagger(\theta_2) U_1^\dagger(\theta_1) \left(\sigma_z^{1} \otimes \sigma_x^{2} \right) U_1(\theta_1) U_2(\theta_2) =\\
-\sigma_x^{1} \sigma_x^{2} \sin\theta_1 \cos\theta_2 
 - \sigma_x^{1} \sigma_z^{2} \sin\theta_1 \sin\theta_2 \\
  + \sigma_z^{1} \sigma_x^{2} \cos\theta_1 \cos\theta_2 
 + \sigma_z^{1} \sigma_z^{2} \cos\theta_1 \sin\theta_2.
\end{aligned}
\end{equation}

(c).  \textbf{Transforming} \(\sigma_x^{1} \otimes \sigma_z^{2}\):
\begin{equation}
\begin{aligned}
U_2^\dagger(\theta_2) U_1^\dagger(\theta_1) \left(\sigma_x^{1} \otimes \sigma_z^{2} \right) U_1(\theta_1) U_2(\theta_2) =\\
-\sigma_x^{1} \sigma_x^{2} \cos\theta_1 \sin\theta_2  + \sigma_x^{1} \sigma_z^{2} \cos\theta_1 \cos\theta_2 \\
 - \sigma_z^{1} \sigma_x^{2} \sin\theta_1 \sin\theta_2 
 + \sigma_z^{1} \sigma_z^{2} \sin\theta_1 \cos\theta_2.
\end{aligned}
\end{equation}

(d). \textbf{Transforming} \(\sigma_x^{1} \otimes \sigma_x^{2}\):
\begin{equation}
\begin{aligned}
U_2^\dagger(\theta_2) U_1^\dagger(\theta_1) \left(\sigma_x^{1} \otimes \sigma_x^{2} \right) U_1(\theta_1) U_2(\theta_2)=\\
\sigma_x^{1} \sigma_x^{2} \cos\theta_1 \cos\theta_2 + \sigma_x^{1} \sigma_z^{2} \cos\theta_1 \sin\theta_2 \\
+ \sigma_z^{1} \sigma_x^{2} \sin\theta_1 \cos\theta_2 + \sigma_z^{1} \sigma_z^{2} \sin\theta_1 \sin\theta_2.
\end{aligned}
\end{equation}

\subsection{Global transformation of Pauli matrices} \label{sec:global}
Let $U_G(\theta)$ be defined as:
\begin{equation}
U_G(\theta) = \exp\left(-i \theta\, \sigma_y^{1} \sigma_y^{2}\right) = \cos \theta\, I - i \sin \theta\, \sigma_y^{1} \sigma_y^{2}.
\label{global_unitary}
\end{equation}
Then the following transformations hold:
\begin{align}
\sigma_z^{1}(\theta) &= U_G^\dagger(\theta)\, \sigma_z^{1}\, U_G(\theta) = \cos 2\theta\, \sigma_z^{1} - \sin 2\theta\, \sigma_x^{1} \sigma_y^{2},\\ 
\sigma_z^{2}(\theta) &= U_G^\dagger(\theta)\, \sigma_z^{2}\, U_G(\theta)\!=\!\cos 2\theta\, \sigma_z^{2} - \sin 2\theta\, \sigma_y^{1} \sigma_x^{2}.\\
 \sigma_z^{1}(\theta)\, \sigma_z^{2}(\theta) &= U_G^\dagger(\theta)\, \sigma_z^{1} \sigma_z^{2}\,U_G(\theta)
    = \sigma_z^{1} \sigma_z^{2}.
\end{align}

\begin{proof}


    

Let
\begin{equation}
A:=\sigma_y^{1}\sigma_y^{2},
\qquad
U_G(\theta)=e^{-i\theta A}.
\end{equation}
Since \(A^2=\mathbf I\), we may write
\begin{align}
U_G(\theta)=\cos\theta\,\mathbf I-i\sin\theta\,A \nonumber \\  U_G^\dagger(\theta)=\cos\theta\,\mathbf I+i\sin\theta\,A.
\label{eq:UG_expand}
\end{align}

\paragraph{Transformation of \(\sigma_z^{1}\).}
Set \(B=\sigma_z^{1}\). Using \(\sigma_y\sigma_z=i\sigma_x\) and
\(\sigma_z\sigma_y=-i\sigma_x\), we obtain
\begin{align}
AB
&=\sigma_y^{1}\sigma_y^{2}\sigma_z^{1}
=(\sigma_y^{1}\sigma_z^{1})\sigma_y^{2}
=i\,\sigma_x^{1}\sigma_y^{2},\\
BA
&=\sigma_z^{1}\sigma_y^{1}\sigma_y^{2}
=-i\,\sigma_x^{1}\sigma_y^{2}.
\end{align}
Hence \(A\) and \(B\) anticommute,
\begin{equation}
\{A,B\}=AB+BA=0,
\end{equation}
and therefore
\begin{equation}
ABA=-B.
\label{eq:ABA_minusB}
\end{equation}

Now substitute Eq.~\eqref{eq:UG_expand} into
\(U_G^\dagger(\theta)\,B\,U_G(\theta)\):
\begin{align}
U_G^\dagger(\theta)\,B\,U_G(\theta)
&=(\cos\theta\,\mathbf I+i\sin\theta\,A)\,
B\,
(\cos\theta\,\mathbf I-i\sin\theta\,A)\nonumber\\
&=\cos^2\theta\,B
+i\cos\theta\sin\theta\,(AB-BA) \nonumber\\
   &
+\sin^2\theta\,ABA.
\end{align}
Using \(AB=-BA\) and Eq.~\eqref{eq:ABA_minusB}, this becomes
\begin{align}
U_G^\dagger(\theta)\,B\,U_G(\theta)
&=(\cos^2\theta-\sin^2\theta)\,B
+2i\cos\theta\sin\theta\,AB\nonumber\\
&=\cos(2\theta)\,\sigma_z^{1}
+2i\cos\theta\sin\theta\,
\bigl(i\,\sigma_x^{1}\sigma_y^{2}\bigr)\nonumber\\
&=\cos(2\theta)\,\sigma_z^{1}
-\sin(2\theta)\,\sigma_x^{1}\sigma_y^{2}.
\end{align}
Thus,
\begin{equation}
U_G^\dagger(\theta)\,\sigma_z^{1}\,U_G(\theta)
=
\cos(2\theta)\,\sigma_z^{1}
-\sin(2\theta)\,\sigma_x^{1}\sigma_y^{2}.
\end{equation}

\paragraph{Transformation of \(\sigma_z^{2}\).}
Similarly, taking \(B=\sigma_z^{2}\), we find
\begin{align}
AB
&=\sigma_y^{1}\sigma_y^{2}\sigma_z^{2}
=\sigma_y^{1}(\sigma_y^{2}\sigma_z^{2})
=i\,\sigma_y^{1}\sigma_x^{2},\\
BA
&=\sigma_z^{2}\sigma_y^{1}\sigma_y^{2}
=-i\,\sigma_y^{1}\sigma_x^{2},
\end{align}
so again \(\{A,B\}=0\) and \(ABA=-B\). Repeating the same calculation gives
\begin{equation}
U_G^\dagger(\theta)\,\sigma_z^{2}\,U_G(\theta)
=
\cos(2\theta)\,\sigma_z^{2}
-\sin(2\theta)\,\sigma_y^{1}\sigma_x^{2}.
\end{equation}

\paragraph{Transformation of \(\sigma_z^{1}\sigma_z^{2}\).}
Using associativity and \(U_G(\theta)U_G^\dagger(\theta)=\mathbf I\), we have
\begin{align}
\bigl(U_G^\dagger(\theta)\sigma_z^{1}U_G(\theta)\bigr)
\bigl(U_G^\dagger(\theta)\sigma_z^{2}U_G(\theta)\bigr) \nonumber\\
=
U_G^\dagger(\theta)\,
\sigma_z^{1}\,
\bigl(U_G(\theta)U_G^\dagger(\theta)\bigr)\,
\sigma_z^{2}\,
U_G(\theta)\nonumber\\
=
U_G^\dagger(\theta)\,\sigma_z^{1}\sigma_z^{2}\,U_G(\theta).
\label{eq:multiplicative_conjugation}
\end{align}

It remains to show that \(\sigma_z^{1}\sigma_z^{2}\) commutes with \(A\).
Indeed,
\begin{align}
A\,\sigma_z^{1}\sigma_z^{2}
&=
\sigma_y^{1}\sigma_y^{2}\sigma_z^{1}\sigma_z^{2}
=
(\sigma_y^{1}\sigma_z^{1})(\sigma_y^{2}\sigma_z^{2})
=
(i\sigma_x^{1})(i\sigma_x^{2}) \nonumber\\
   &
=
-\sigma_x^{1}\sigma_x^{2},\\
\sigma_z^{1}\sigma_z^{2}\,A 
&=
\sigma_z^{1}\sigma_z^{2}\sigma_y^{1}\sigma_y^{2}
=
(\sigma_z^{1}\sigma_y^{1})(\sigma_z^{2}\sigma_y^{2})
=
(-i\sigma_x^{1})(-i\sigma_x^{2})\nonumber\\
   &
=
-\sigma_x^{1}\sigma_x^{2}.
\end{align}
Hence
\begin{equation}
[A,\sigma_z^{1}\sigma_z^{2}]=0.
\end{equation}
Therefore conjugation leaves \(\sigma_z^{1}\sigma_z^{2}\) invariant:
\begin{equation}
U_G^\dagger(\theta)\,\sigma_z^{1}\sigma_z^{2}\,U_G(\theta)
=
\sigma_z^{1}\sigma_z^{2}.
\end{equation}

Combining this with Eq.~\eqref{eq:multiplicative_conjugation}, we conclude that
\begin{equation}
\sigma_z^{1}(\theta)\,\sigma_z^{2}(\theta)
=
U_G^\dagger(\theta)\,\sigma_z^{1}\sigma_z^{2}\,U_G(\theta)
=
\sigma_z^{1}\sigma_z^{2}.
\end{equation}
proving the stated identities.
\end{proof}

\section{Proof of Theorem 1 by Induction on \(N\)}\label{thm:optimal_feedbak}
From the base case \(N\!=\!1\), see Section ~\ref{sec:examples}, we have $A_1\!=\!\frac{\epsilon}{2}\, a$, $B_1\!=\!\frac{\epsilon}{2}\,b$, and $\theta_1\!=\! \arctan \left(-b/a\right)$. 
Then, for the case $N\!=\!2$, the two-coupled quantum system, the Hamiltonian is
\begin{equation}
H_S =  \frac{1}{2}\mathbb{I}_{\otimes 2} + \frac{\epsilon_1}{2} \sigma_z^{1} + \frac{\epsilon_2}{2} \sigma_z^{2}  + \Delta_{z}^{12}\, \sigma_z^{1} \sigma_z^{2}, 
\end{equation}
where $\epsilon_i$ is the energy of each two-level system $i$, and $\Delta_z^{12}$ represents the coupling strength between the systems. In what follows, we take $\Delta_z^{12}\!\equiv\!\Delta$. 
Applying the unitary feedback of the form $U_j(\theta_j) = e^{-i \theta_j \sigma_y^{j} / 2}, (j\!=\!1,2)$; the system state after the feedback reads,
\[
\rho_F = U_1(\theta_1) U_2(\theta_2) \rho_{M_k} U_2^\dagger(\theta_2) U_1^\dagger(\theta_1).
\]
The corresponding energy of the system is $E_F\!=\!\operatorname{Tr} [\rho_F H_S ]\!=\!\operatorname{Tr} [ \rho_M H_F ]$. 
 The feedback Hamiltonian can be written as
\begin{equation}
\begin{aligned}
H_F &= U_1^\dagger(\theta_1)\,U_2^\dagger(\theta_2)\,H_S\,U_2(\theta_2)\,U_1(\theta_1)\\
&= \frac{\epsilon_1}{2}\bigl(\sigma_z^{1}\cos\theta_1-\sigma_x^{1}\sin\theta_1\bigr)
+\frac{\epsilon_2}{2}\bigl(\sigma_z^{2}\cos\theta_2-\sigma_x^{2}\sin\theta_2\bigr)\\
&\quad+ \Delta\bigl(\sigma_z^{1}\cos\theta_1 -\sigma_x^{1}\sin\theta_1\bigr)
\bigl(\sigma_z^{2}\cos\theta_2-\sigma_x^{2}\sin\theta_2\bigr).
\end{aligned}
\end{equation}

Expanding the interaction term:
\begin{equation}
\begin{aligned}
&\left( \sigma_z^{1} \cos \theta_1 - \sigma_x^{1} \sin \theta_1 \right)
\left( \sigma_z^{2} \cos \theta_2 - \sigma_x^{2} \sin \theta_2 \right) \\
&= \sigma_z^{1} \sigma_z^{2} \cos \theta_1 \cos \theta_2
- \sigma_z^{1} \sigma_x^{2} \cos \theta_1 \sin \theta_2 \\
&\quad
- \sigma_x^{1} \sigma_z^{2} \sin \theta_1 \cos \theta_2
+ \sigma_x^{1} \sigma_x^{2} \sin \theta_1 \sin \theta_2.
\end{aligned}
\end{equation}

Define the following traces:
\[
\langle \sigma_z^{(1)}\rangle = a_1, 
\langle \sigma_x^{(1)}\rangle = b_1,
\langle \sigma_z^{(2)}\rangle = a_2,
\langle \sigma_x^{(2)}\rangle = b_2,
\]
and
\[
\langle \sigma_z^{1}\sigma_z^{2}\rangle\!=\!c_{zz}, 
\langle \sigma_z^{1}\sigma_x^{2}\rangle\!=\!c_{zx}, 
\langle \sigma_x^{1}\sigma_z^{2}\rangle\!=\!c_{xz}, 
\langle \sigma_x^{(1)}\sigma_x^{(2)}\rangle = c_{xx}.
\]

Therefore, the energy after feedback is
\begin{equation}
\begin{aligned}
E_F = & \operatorname{Tr} [ \rho_M H_F ] \\
=& \frac{\epsilon_1}{2} (a_1 \cos \theta_1 - b_1 \sin \theta_1)
+ \frac{\epsilon_2}{2} (a_2 \cos \theta_2 - b_2 \sin \theta_2) \\
&+ \Delta \cos \theta_1\left[ c_{zz}  \cos \theta_2 - c_{zx}  \sin \theta_2\right] \\
&-\Delta \sin \theta_1 \left[c_{xz}  \cos \theta_2 - c_{xx}  \sin \theta_2 \right].
\end{aligned}
\label{eq:2QME_Ef}
\end{equation}


To achieve our goal of optimal performance of the feedback measurement-based engine, we determine the stationary points of \( E_F \) with respect to \( \theta_1 \) and \( \theta_2 \), and then select the branch that minimizes \(E_F\). Thus, we obtain;
\begin{equation}
\begin{aligned}
\frac{\partial E_F}{\partial \theta_1}
=& -\frac{\epsilon_1}{2} \left(a_1 \sin \theta_1 + b_1 \cos \theta_1\right)\\
&-\Delta \sin \theta_1 \left[c_{zz}  \cos \theta_2 - c_{zx}  \sin \theta_2\right]\\
&-\Delta \cos \theta_1 \left[c_{xz}  \cos \theta_2 - c_{xx}  \sin \theta_2 \right].
\end{aligned}
\label{dEtheta1}
\end{equation}

Similarly, for \(\theta_2 \):
\begin{equation}
\begin{aligned}
\frac{\partial E_F}{\partial \theta_2}
=& -\frac{\epsilon_2}{2} \left(a_2 \sin \theta_2 + b_2 \cos \theta_2\right)\\
&-\Delta \sin \theta_2 \left[c_{zz} \cos \theta_1 - c_{xz} \sin \theta_1 \right]\\
&-\Delta \cos \theta_2 \left[c_{zx} \cos \theta_1 - c_{xx} \sin \theta_1 \right].
\end{aligned}
\label{dEtheta2}
\end{equation}


Solving Eqs.~(\ref{dEtheta1}) and (\ref{dEtheta2}) simultaneously, i.e.
\[
\frac{\partial E_F}{\partial \theta_1}=0,
\qquad
\frac{\partial E_F}{\partial \theta_2}=0,
\]
gives the coupled stationary-point equations. From Eq.~(\ref{dEtheta1}), we obtain
\begin{equation}
A_1\sin\theta_1 + B_1\cos\theta_1 = 0,
\end{equation}
with
\begin{align}
A_1 &= \frac{\epsilon_1}{2} a_1 + \Delta \left(c_{zz} \cos \theta_2 - c_{zx} \sin \theta_2 \right), \label{appTheta1}\\
B_1 &= \frac{\epsilon_1}{2} b_1 + \Delta \left(c_{xz} \cos \theta_2 - c_{xx} \sin \theta_2 \right).
\label{appTheta1b}
\end{align}
Similarly, from Eq.~(\ref{dEtheta2}) one gets
\begin{equation}
A_2\sin\theta_2 + B_2\cos\theta_2 = 0,
\end{equation}

with: 
\begin{align}
A_2 &= \frac{\epsilon_2}{2} a_2 + \Delta \left(c_{zz} \cos \theta_1 - c_{xz} \sin \theta_1 \right),\\
B_2 &= \frac{\epsilon_2}{2} b_2 + \Delta \left(c_{zx} \cos \theta_1 - c_{xx} \sin \theta_1 \right).
\label{appTheta2}
\end{align}

Solving for the angles, \( \theta_1 \) and \( \theta_2 \), from equations \ref{appTheta1} - \ref{appTheta2}, we obtain
$\tan \theta_1\!=\!\frac{-B_1}{A_1}, \quad \tan \theta_2\!=\!\frac{-B_2}{A_2}$.
Thus, the stationary feedback angles are:

\begin{equation}
\theta_1^\ast = \operatorname{atan2}(-B_1, A_1)
\qquad (\mathrm{mod}\,\pi),
\end{equation}
and similarly for $\theta_2.$

We note that \(A_1\) and \(B_1\) depend on \(\theta_2\),  while \(A_2\) and \( B_2 \) depend on \( \theta_1 \), which makes these equations coupled. 


\emph{Inductive Step (\(N\to N+1\))} -- Assume that the claim (including explicit forms of \(A_j\) and \(B_j\)) is true for any \(N\)-coupled qubits (e.g, shifted Ising-like Hamiltonian) as a working substance of a quantum measurement-based engine. Now considering an Ising-type Hamiltonian of the form
\begin{equation}
\begin{aligned}
H_Q^{(N+1)} 
\;=\;
\sum_{j=1}^{N+1} \frac{\epsilon_j}{2}\,\sigma_z^{j}
+\;
\sum_{j<k}^{N+1} \Delta_z^{jk}\,\sigma_z^{j}\,\sigma_z^{k},
\end{aligned}
\label{eq:HQ_Nplus1}
\end{equation}
where we have neglected the term $\frac{1}{2}\,I_{\otimes(N+1)}$.
Let \(\rho_M^{(N+1)}\) be the post-measurement state and apply a local feedback unitary of the form
\begin{equation}
U(\theta_1,\dots,\theta_{N+1})
\;=\;\bigotimes_{j=1}^{N+1} \exp\!\Bigl(-\,\frac{i\,\theta_j}{2}\,\sigma_y^{j}\Bigr).
\label{eq:local_feedback_unitaries_Nplus1}
\end{equation}
After applying the feedback, we obtain the energy as
\begin{equation}
\begin{aligned}
E_F^{(N+1)}(\theta_1,\dots,\theta_{N+1})
\!=\!&
\mathrm{Tr}\Bigl[\rho_M^{(N+1)}\,U^\dagger\,H_Q^{(N+1)}\,U\Bigr],
\end{aligned}
\label{eq:post_feedback_Nplus1}
\end{equation}
where we have use $\rho_F^{(N+1)}\!=\!
U\,\rho_M^{(N+1)}\,U^\dagger$.
Since the unitary,
$\exp\!(-i\,\theta_j\,\sigma_y^{j}/2)$, 
acts only on system \(j\), the derivative
\begin{equation}
\frac{\partial}{\partial\theta_j}
\Bigl(U^\dagger\,{H}_Q^{(N+1)}\,U\Bigr),
\label{eq:partial_derivative_HQ}
\end{equation}
affects terms where coefficients of \(\sigma_z^{j}\) appear: the Zeeman piece \(\frac{\epsilon_j}{2}\sigma_z^{j}\), plus any Ising coupling terms \(\Delta_z^{jk}\sigma_z^{j}\sigma_z^{k}\). Each derivative yields a linear combination of \(\sin\theta_j\) and \(\cos\theta_j\), with coefficients (expectation values in \(\rho_M^{(N+1)}\)) that come from
the maps \(\sigma_{z}^{j}\mapsto \sigma_{z}^{j}\cos\theta_j - \sigma_{x}^{j}\sin\theta_j\). 
Hence, the stationary condition
\[
\frac{\partial E_F^{(N+1)}}{\partial \theta_j}=0
\]
implies
\begin{equation}
A_j(\{\theta_{k\neq j}\})\,\sin\theta_j
+
B_j(\{\theta_{k\neq j}\})\,\cos\theta_j
= 0,
\label{eq:stationarity_linear}
\end{equation}
where
\[
A_j(\{\theta_{k\neq j}\})
=
\frac{\epsilon_j}{2}\,\Bigl\langle \widetilde{\sigma_{z}^{j}}\Bigr\rangle
+\sum_{k\neq j}\Delta_z^{jk}\,\Bigl\langle\widetilde{\sigma_{z}^{j}\sigma_{z}^{k}}\Bigr\rangle,
\label{eq:A_j_induction}
\]
and
\[
B_j(\{\theta_{k\neq j}\})
=
\frac{\epsilon_j}{2}\,\Bigl\langle \widetilde{\sigma_{x}^{j}}\Bigr\rangle
+\sum_{k\neq j}\Delta_z^{jk}\,\Bigl\langle\widetilde{\sigma_{x}^{j}\sigma_{z}^{k}}\Bigr\rangle.
\label{eq:B_j_induction}
\]

Here, the “\(\widetilde{\phantom{\sigma}}\)” indicates partial rotations on the other qubits. Each \(\langle\widetilde{\sigma_{\alpha}^{j}\sigma_{\beta}^{k}}\rangle\) means: take the post-measurement state \(\rho_M\), apply the partial rotations associated with all qubits except the one(s) whose operator(s) are being labeled, and then evaluate the corresponding expectation value. The same rule applies to single-qubit operators such as \(\widetilde{\sigma_z^j}\).

Solving Eq.~(\ref{eq:stationarity_linear}) yields
\begin{equation}
\tan\theta_j
=
\frac{-B_j(\theta_1,\dots,\theta_{j-1},\theta_{j+1},\dots,\theta_{N+1})}
{A_j(\theta_1,\dots,\theta_{j-1},\theta_{j+1},\dots,\theta_{N+1})},
\label{eq:tangent_equation}
\end{equation}
or equivalently
\begin{equation}
\theta_j^\ast
=
\operatorname{atan2}\!\bigl(-B_j,A_j\bigr)
\qquad (\mathrm{mod}\,\pi).
\end{equation}
By the inductive hypothesis, appending the \((N+1)\)-th system preserves the same structure of \(A_j\) and \(B_j\). Therefore, the claimed form holds for all \(N\). \(\quad\Box\)

\section{Global (non-local) feedback}\label{global_appendix}
In this section, we evaluate the energy after feedback using a global (non-local) feedback protocol. Then, we compare the resulting maximum work extraction with the case of local feedback when implemented in a measurement-based engine with a coupled two-level system as a working substance.

Using the identities from Section~\ref{sec:global}, we obtain
\begin{align}
E_F^{(G)}(\theta)
&=
\frac{\epsilon_1}{2}\,
\Tr\!\left[\rho_M\bigl(\cos(2\theta)\sigma_z^1-\sin(2\theta)\sigma_x^1\sigma_y^2\bigr)\right]
\nonumber\\
&\quad
+
\frac{\epsilon_2}{2}\,
\Tr\!\left[\rho_M\bigl(\cos(2\theta)\sigma_z^2-\sin(2\theta)\sigma_y^1\sigma_x^2\bigr)\right] \nonumber\\
&
+
\Delta\,\Tr(\rho_M \sigma_z^1\sigma_z^2)
\nonumber\\
&=
\frac{\epsilon_1}{2}\bigl(\cos(2\theta)\,a_1-\sin(2\theta)\,c_{xy}\bigr) \nonumber\\
&
+
\frac{\epsilon_2}{2}\bigl(\cos(2\theta)\,a_2-\sin(2\theta)\,c_{yx}\bigr)
+\Delta\,c_{zz},
\label{eq:EF_global_correct}
\end{align}
where \(c_{xy}=\langle \sigma_x^1\sigma_y^2\rangle\) and
\(c_{yx}=\langle \sigma_y^1\sigma_x^2\rangle\).

On the other hand, following Eq.~(\ref{eq:2QME_Ef}), the energy after local feedback is
\begin{equation}
\begin{aligned}
E_F^{(L)}
=& \frac{\epsilon_1}{2} (a_1 \cos \theta_1 - b_1 \sin \theta_1)
+ \frac{\epsilon_2}{2} (a_2 \cos \theta_2 - b_2 \sin \theta_2) \\
&+ \Delta \cos \theta_1\left[ c_{zz}  \cos \theta_2 - c_{zx}  \sin \theta_2\right] \\
&-\Delta \sin \theta_1 \left[c_{xz}  \cos \theta_2 - c_{xx}  \sin \theta_2 \right].
\end{aligned}
\end{equation}

For the restricted non-local ansatz in Eq.~\eqref{global_unitary}, the best local protocol yields a lower post-feedback energy, equivalently a larger extracted work, over the parameter range shown.


\bibliography{paper}
\end{document}